\documentclass[journal]{IEEEtran}
\IEEEoverridecommandlockouts
\usepackage{graphicx}
\usepackage{amsmath,amsthm,amsfonts,amssymb}
\usepackage{cite}
\usepackage{bm}
\usepackage{bbm}
\usepackage{url}
\usepackage{array}
\usepackage{color,soul}
\usepackage{multirow}
\usepackage{booktabs}
\usepackage[table,xcdraw]{xcolor}
\usepackage{enumitem}
\theoremstyle{plain}
\newtheorem{thm}{Theorem}

\newtheorem{prop}[thm]{Proposition}

\newtheorem*{rem}{Remark}

\usepackage{subcaption}

\usepackage[english]{babel}
\usepackage{algorithm}
\usepackage[noend]{algpseudocode}

\allowdisplaybreaks[4]

\begin{document}
\title{Connecting the Unconnectable through Feedback}
\author{Yimeng Li and Yulin Shao
	\thanks{The authors are with the State Key Laboratory of Internet of Things for Smart City and the Department of Electrical and Computer Engineering, University of Macau, Macau S.A.R. Y. Shao is also with the Department of Electrical and Electronic Engineering, Imperial College London (emails:\{MC35385,ylshao\}@um.edu.mo).}
}
	
\maketitle
\begin{abstract}
Reliable uplink connectivity remains a persistent challenge for IoT devices, particularly those at the cell edge, due to their limited transmit power and single-antenna configurations. This paper introduces a novel framework aimed at connecting the unconnectable, leveraging real-time feedback from access points (APs) to enhance uplink coverage without increasing the energy consumption of IoT devices. At the core of this approach are feedback channel codes, which enable IoT devices to dynamically adapt their transmission strategies based on AP decoding feedback, thereby reducing the critical uplink SNR required for successful communication. Analytical models are developed to quantify the coverage probability and the number of connectable APs, providing a comprehensive understanding of the system's performance. Numerical results validate the proposed method, demonstrating substantial improvements in coverage range and connectivity, particularly for devices at the cell edge, with up to a $51\%$ boost in connectable APs.
Our approach offers a robust and energy-efficient solution to overcoming uplink coverage limitations, enabling IoT networks to connect devices in challenging environments.
\end{abstract}
	
\begin{IEEEkeywords}
IoT, coverage analysis, feedback channel coding.
\end{IEEEkeywords}

\section{Introduction}
In the rapidly expanding realm of the Internet of Things (IoT), reliable connectivity for devices at coverage edges remains a persistent challenge \cite{kanj2020tutorial,parkvall2017nr,shao2024theory}. Fig.~\ref{fig:Introduction}(a) depicts a frequently encounter situation where an IoT device at the cell edge is still able to ``hear'' downlink transmissions from its access point (AP), yet the AP struggles to detect the device's uplink signals. This disparity arises from the limited transmit power and single-antenna design typical of many IoT devices, resulting in a pronounced imbalance between downlink and uplink performance. Overcoming this ``unconnectable'' gap is essential for unlocking the full potential of IoT applications -- particularly those spanning smart cities, healthcare, and industrial automation -- that demand wide and deep coverage.

\begin{figure}
	\centering
	\includegraphics[width=0.9\columnwidth]{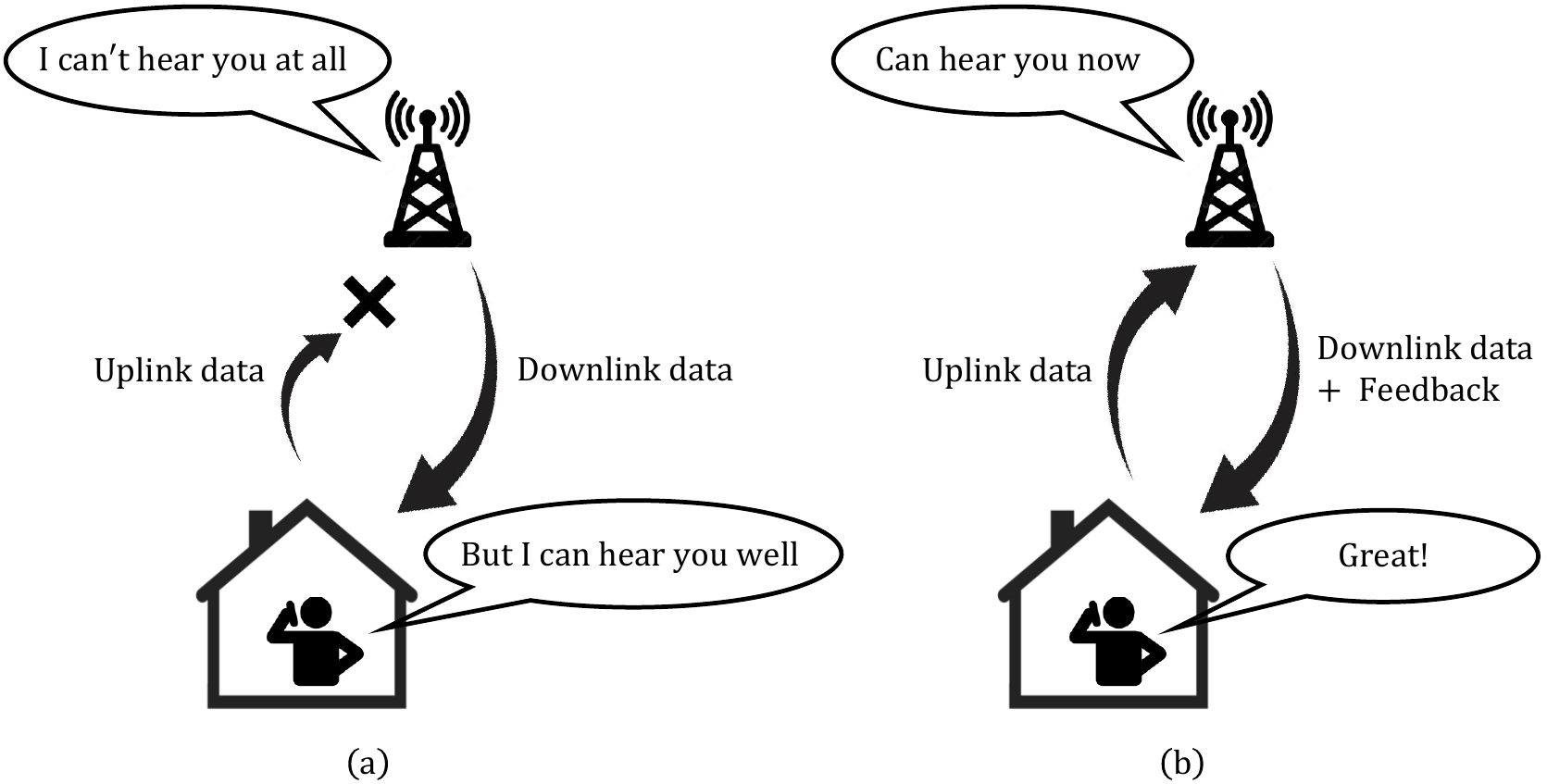}\\
	\caption{(a) Typical asymmetric communication scenario, where downlink is successful while uplink fails. (b) Enhanced uplink with real-time feedback, where IoT devices leverage feedback to improve uplink coverage.}
	\label{fig:Introduction}
\end{figure}

To mitigate such uplink coverage limitations, conventional strategies have involved increasing uplink transmission power, employing repetitive transmissions (e.g., NB-IoT \cite{kanj2020tutorial}), or exploiting multi-antenna diversity (e.g., the Transmission Mode (TM) 2 of LTE and TM 9 of 5G NR \cite{parkvall2017nr}). However, these methods are not favorable or impractical for energy-constrained, single-antenna IoT devices. 

In this context, we seek an alternative that boosts uplink connectivity without raising the transmission power of IoT devices. Our proposed solution is a feedback-aided deep and wide coverage communication approach, built on the innate asymmetry between downlink and uplink communications of IoT cells. Specifically, APs typically have multiple antennas and virtually unlimited power resources, allowing them to achieve extensive downlink coverage, as shown in Fig.~\ref{fig:Introduction}(a). By exploiting this capability, the AP can transmit real-time feedback -- indicating its current decoding status -- to the IoT device, thereby boosting the IoT device's uplink coding efficiency through feedback channel codes \cite{Shannon,Polyanskiy}. As illustrated in Fig.~\ref{fig:Introduction}(b), this feedback mechanism effectively extends the uplink coverage radius at the same device power consumption, thereby connecting the unconnectable.

The cornerstone of our proposed approach lies in the innovative use of feedback channel codes \cite{Shannon,Polyanskiy,shao2024deep,SK1,GBAFC}. Unlike traditional forward-only error correction codes, feedback channel codes utilize real-time feedback from the AP to dynamically adjust the coding strategy and address mis-decoding at the receiver. This capability enables the receiver to decode data at much lower SNR levels, effectively extending the communication coverage. The introduction of feedback brings a dual dependency: uplink decoding performance is now influenced by both the uplink and downlink channels, a phenomenon we term dual-channel coupling. This concept is related to rateless and adaptive feedback coding schemes \cite{luby2002lt,shokrollahi2006raptor}, where the transmitter continuously sends coded symbols until successful decoding is confirmed. These schemes exploit feedback to adaptively allocate redundancy, enabling reliable communication even under uncertain or poor channel conditions.

The contributions of this paper are summarized as follows.
\begin{itemize}[leftmargin=0.4cm]
    \item We introduce a new uplink coverage extension approach that harnesses AP's real-time feedback to enhance the coverage probability in uplink-downlink asymmetric communications, a typical scenario in mobile edge networks.
    \item We quantitatively analyze the uplink coverage probabilities and the resultant number of connectable APs under the feedback-aided coverage extension paradigm. By solving the uplink-downlink dual-channel coupling, we show that real-time feedback effectively mitigates the exponential decay in coverage probability with distance, enabling the IoT device to establish links with more APs -- even those previously out of reach -- without increasing transmit power.
    \item Our results show that the coverage improvement is most pronounced for devices at the cell edge, where uplink SNR is weakest. By significantly boosting coverage probability in these challenging regions, the proposed feedback-based approach offers a robust pathway to ``connect the unconnectable'', thereby enabling dependable IoT services in demanding environments.
\end{itemize}


\begin{figure}
    \centering
        \centering        \includegraphics[width=0.7\columnwidth]{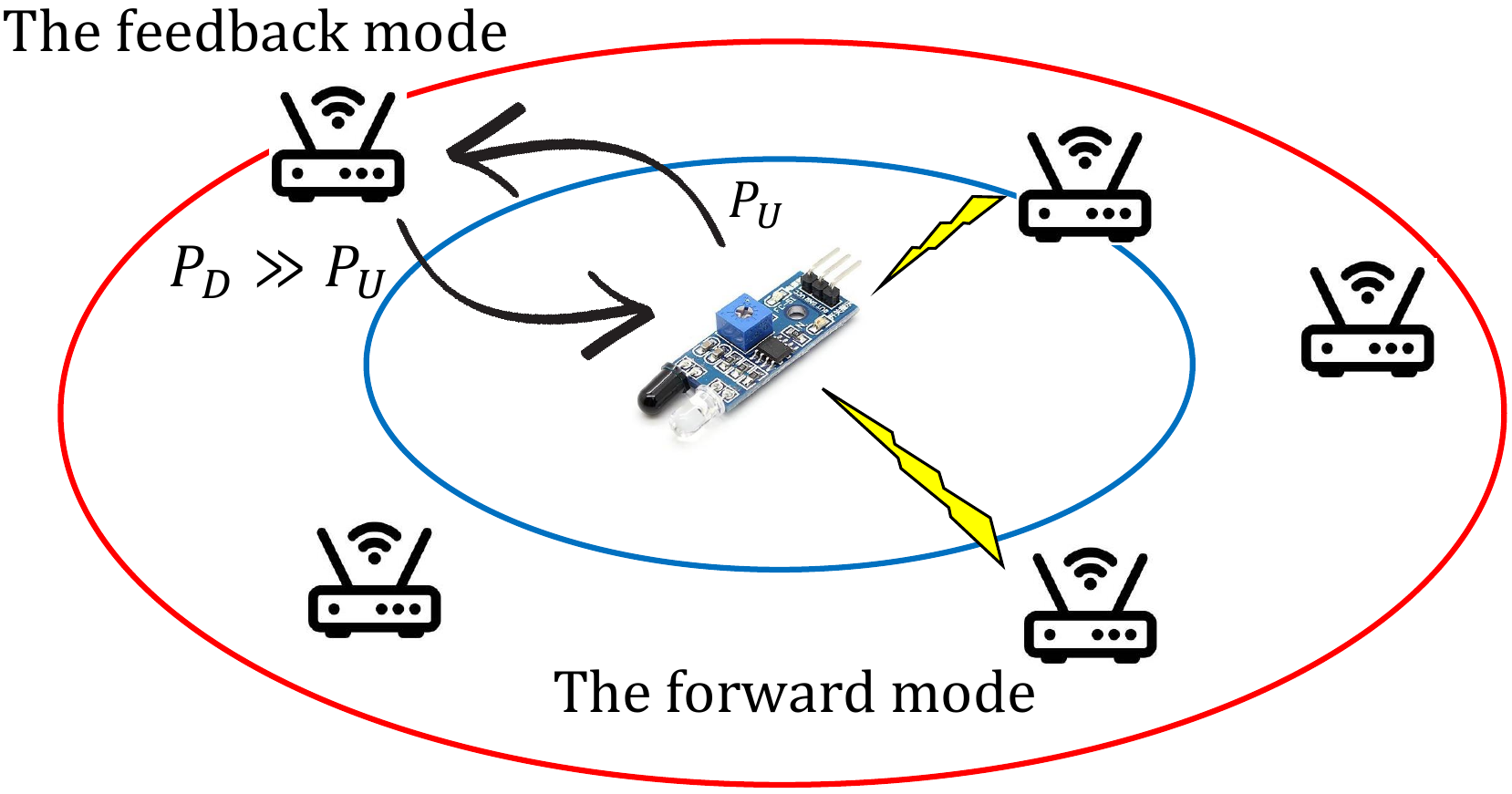}\\
	\caption{An IoT Device connects with distributed APs. The target scenario of our feedback-aided coverage extension scheme includes both fixed and mobile IoT devices.}
\end{figure}

\section{System Model}\label{sec:model}
We consider an IoT environment comprising distributed wireless access points (APs). Focusing on a single IoT device, our primary interest lies in its uplink transmission coverage, specifically, the number of APs that fall within the communication range of the device with the real-time feedback from the APs. 

We model the spatial distribution of the APs using a Poisson Point Process (PPP) with intensity $\lambda$, representing the average number of APs per unit area. The uplink path loss from the IoT device to an AP at a distance $R$ can be written as
\begin{equation}\label{def:1}
L_{U}(R)=C_{U}R^{-\alpha_{U}},
\end{equation}
where $C_{U}$ is the uplink path loss intercepts and $\alpha_{U}$ is the uplink path loss exponents \cite{shao2021federated}.

The transmit power of the IoT device and the AP are denoted by $P_U$ and $P_D$, respectively. Due to energy constraints typical in IoT devices, we have $P_D\gg P_U$. The uplink signal-to-noise ratio (SNR) at the AP can be expressed as
\begin{equation}\label{eq:snru}
\eta_U(R)=P_U\frac{G_{t}G_{r}L_{U}(R)}{\sigma_{U}^2}|h_U|^{2},
\end{equation}
where $G_t$ and $G_r$ denote the transmit and receive antenna gains, respectively; $\sigma_{U}^2$ is the additive white Gaussian noise (AWGN) power; $h_U$ is the uplink small-scale fading coefficient. We model $h_U$ by Rayleigh distribution, i.e., {$|h_{U}|^{2}\sim f(|h_{U}|^{2};\mu_U)$, where $f(x)$ denotes the probability density function (PDF) of the exponential distribution.} In this paper, we will also write SNR in decibels, in which case a `dB' will be added in the subscript, e.g., $\eta_{U,\text{dB}}\triangleq 10\lg \eta_{U}$.

Downlink path loss is similarly modeled but with different parameters $L_{D}(R)\!=\!C_{D}R^{-\alpha_{D}}$, where $\alpha_{D}$, $C_{D}$ are the respective downlink path loss exponents and intercepts.
The downlink SNR received at the IoT device can be written as
\begin{equation}\label{def:d}
\eta_D(R)=P_D\frac{G_{t}G_{r}L_{D}(R)}{\sigma_{D}^2}|h_D|^{2},
\end{equation}
where $\sigma_{D}^2$ is AWGN power; $h_D$ is the downlink Rayleigh fading coefficient, and {$|h_{D}|^{2} \sim f(|h_{D}|^{2};\mu_D)$.}

To transmit a packet of $K$ bits, the IoT device conventionally employs a forward error correction code $\mathcal{C}(K,N,\epsilon^*)$ to protect the information bits, where $N$ is the number of uplink channel uses, and $\epsilon^*$ is the target packet error rate (PER), reflecting the target throughput. By the finite length coding theorem \cite{polyanskiy2010channel}, the PER $\epsilon$ is determined by the uplink SNR $\eta_{U}$:
\begin{equation}
    \sqrt{NV} Q^{-1}(\epsilon) \approx \frac{N}{2}\log (1+\eta_{U})-K,
\end{equation}
where $V=\frac{\eta_{U}(\eta_{U}+2)}{2(\eta_{U}+1)^2}\log^2 e$ is the channel dispersion, and $Q$ represents the Q-function. 

We refer to this conventional approach, where only forward channel coding is employed, as the forward mode. In forward mode, the critical uplink SNR required to achieve the target PER $\epsilon^*$ is denoted by $\Omega_c$. Therefore, an AP is said to be connectable for the IoT device, or the IoT device is under the coverage of the AP, if the uplink SNR $\eta_{U}\geq \Omega_c$.

When the AP provides real-time feedback to the IoT device, the channel coding efficiency can be significantly improved. This enhancement effectively lowers the critical uplink SNR required to achieve the target PER, thereby increasing the probability that an AP at any distance $R$ is connectable. We refer to this enhanced communication approach as the feedback mode.

In feedback mode, the IoT device employs a feedback error correction code $\mathcal{C}_f(K,N,N',\epsilon^*)$, where $K$, $N$, $\epsilon^*$ are as defined in $\mathcal{C}(K,N,\epsilon^*)$, while $N'$ is the number of downlink channel uses dedicated to feedback. 
In this paper, we consider the feedback coding architecture in \cite{shao2024deep,GBAFC}, in which case $N'=aN$, where $a\in\mathbb{Z}^+$ is a positive integer. The critical uplink SNR in the feedback mode required to meet the target $\epsilon^*$, denoted as $\Omega_{f}$, follows a logistic function:
\begin{equation}\label{eq:fit}
    \Omega_{f,\text{dB}} = \frac{1}{ \exp\left(u_0{\eta}_{D,\text{dB}}+u_1 a+u_2{\eta}_{D,\text{dB}} a + u_3\right) + u_4} + u_5,
\end{equation}
where $\{u_0,u_1,u_2,u_3,u_4,u_5\}$ are constants, determined by the target PER $\epsilon^*$. An AP is deemed connectable for the IoT device if the uplink SNR $\eta_{U}\geq \Omega_f$. 

As can be seen from \eqref{eq:fit}, unlike the critical SNR $\Omega_c$ in the forward mode, which depends solely on the target PER $\epsilon^*$, the critical SNR $\Omega_f$ in the feedback mode also depends on the feedback channel quality ${\eta}_{D}$ and allocated feedback channel resources $a$.
Consequently, $\Omega_f$ becomes a random variable due to the inherent randomness of the downlink channel quality ${\eta}_{D}$. This dual dependency -- on both uplink and downlink channel conditions -- introduces a unique coupling effect between the two channels. Specifically, the uplink channel determines the IoT device's ability to meet the SNR threshold for a given distance to the AP, while the downlink channel impacts the efficiency of feedback communication, which in turn affects the required critical SNR in the uplink.

\section{Uplink Coverage Analysis with Feedback}\label{sec:coverage_analysis}
In this section, we analyze the number of APs that are connectable in the feedback mode and quantify the performance gains achieved by leveraging real-time feedback. Let us start with the coverage probability analysis.

\begin{prop}\label{prop:1}
In feedback mode, the probability that an AP at a distance $R$ is connectable, denoted by $\varphi_{f}(R)\triangleq\Pr\big(\eta_{U}(R)\geq\Omega_f\big)$, can be approximated as
\begin{equation*}
\varphi_{f}(R) \approx\sum\limits_{k=1}^{{L}}w_{k}e^{-\mu_{U}\left(J_{1,k}R^{\alpha_{U}}-J_{2,k}\right)},
\end{equation*}
where the approximation is based on a Gauss-Laguerre quadrature of order ${L}$, with $x_k$ denoting the roots of the Laguerre polynomials and $w_k$ the corresponding weights \cite{ioakimidis1993gauss}. The coefficients $J_{1,k}$ and $J_{2,k}$ are functions of $x_k$, as defined in \eqref{eq:26}.
\end{prop}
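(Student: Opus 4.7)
The plan is to decompose the coverage probability by conditioning on the downlink fading, so that the uplink and downlink randomness can be disentangled by the tower rule. Writing
\begin{equation*}
\varphi_f(R) = \mathbb{E}_{|h_D|^2}\!\left[\Pr\!\big(\eta_U(R)\geq \Omega_f \,\big|\, |h_D|^2\big)\right],
\end{equation*}
fixes $\eta_D$, and hence $\Omega_f$, inside the conditional through \eqref{def:d} and \eqref{eq:fit}. Because $|h_U|^2\sim\mathrm{Exp}(\mu_U)$, I would then use \eqref{eq:snru} together with the complementary CDF of an exponential random variable to obtain
\begin{equation*}
\Pr\!\big(\eta_U(R)\geq \Omega_f \,\big|\, |h_D|^2\big) = \exp\!\left(-\mu_U \frac{\Omega_f\,\sigma_U^2\,R^{\alpha_U}}{P_U G_t G_r C_U}\right),
\end{equation*}
which already pins down the $R^{\alpha_U}$ term appearing in the exponent of the claim.

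Next I would marginalize over $|h_D|^2\sim\mathrm{Exp}(\mu_D)$ to reduce the problem to a one-dimensional integral in $y=|h_D|^2$. Substituting $x=\mu_D y$ rewrites this integral as $\int_0^\infty e^{-x}\,g(x;R)\,dx$ in canonical Gauss-Laguerre form, where $g(x;R)$ packages all $R$-dependent quantities, including $\Omega_f$ evaluated through the logistic formula \eqref{eq:fit} at the corresponding downlink SNR. A Gauss-Laguerre rule of order $L$ then replaces the integral by $\sum_{k=1}^{L} w_k g(x_k;R)$. The final step is to unpack $g(x_k;R)$: using the dB-to-linear conversion of $\Omega_{f,\text{dB}}$ and the identity $\eta_{D,\text{dB}}=10\lg(c(R) x_k/\mu_D)$, with $c(R)=P_D G_t G_r C_D R^{-\alpha_D}/\sigma_D^2$, I would collect the coefficient of $R^{\alpha_U}$ arising from the $\Omega_f$ factor into $J_{1,k}$ and the $R$-independent constants produced by the substitution and by the additive shift $u_5$ in \eqref{eq:fit} into $\mu_U J_{2,k}$, thereby matching the stated form.

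The hard part of the argument is clearly the handling of $\Omega_f$ through the logistic dependence \eqref{eq:fit}: its nested exponential-in-dB structure rules out any closed-form evaluation of the marginal integral, which is precisely why the Gauss-Laguerre machinery is needed in the first place, and it also dictates the exact algebraic expressions for $J_{1,k}$ and $J_{2,k}$ that must be worked out carefully in \eqref{eq:26}. By contrast, every other step, namely the conditioning on $|h_D|^2$, the use of the Rayleigh (i.e., exponential in $|h|^2$) tail on the uplink, and the final application of a standard Gauss-Laguerre rule, is routine, so the proof reduces to careful bookkeeping once the quadrature substitution is made.
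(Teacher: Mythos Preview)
Your outline matches the paper exactly through the Gauss--Laguerre step: condition on $|h_D|^2$, use the exponential tail of $|h_U|^2$ to get $\exp\!\big(-\mu_U\,\Omega_f\sigma_U^2 R^{\alpha_U}/(P_U G_t G_r C_U)\big)$, then average over $|h_D|^2\sim\mathrm{Exp}(\mu_D)$ via the substitution $x=\mu_D|h_D|^2$ and an $L$-point Gauss--Laguerre rule to reach $\sum_k w_k e^{-\mu_U g(R,x_k/\mu_D)}$. That part is fine.

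The gap is in your ``final step''. After the quadrature you still have
\[
g\!\left(R,\tfrac{x_k}{\mu_D}\right)=Z_2\,R^{\alpha_U}\cdot 10^{\Omega_{f,\text{dB}}/10},
\]
and $\Omega_{f,\text{dB}}$ is a \emph{nonlinear} function of $R$ through $\eta_{D,\text{dB}}=10\lg(c(R)x_k/\mu_D)$ and the logistic \eqref{eq:fit}. This does \emph{not} collapse to $J_{1,k}R^{\alpha_U}-J_{2,k}$ by collecting coefficients; the paper gets there only by a chain of further Taylor approximations: first $\ln R^{-\alpha_D}\approx R^{-\alpha_D}-1$ inside $\eta_{D,\text{dB}}$, then $e^{y}\approx 1+y$ to kill the exponential in the logistic denominator, then $1/(1+z)\approx 1-z$ to linearize that denominator, and finally another $e^{x}\approx 1+x$ (plus a controlled second-order remainder $(1+x)^2\approx 1+2x$) to handle the dB-to-linear conversion $10^{\Omega_{f,\text{dB}}/10}$. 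Only after all of this does $g$ become affine in $R^{\alpha_U}$ and $R^{\alpha_U-\alpha_D}$, and the $J_{2,k}$ ``constant'' is precisely the coefficient of the cross term $R^{\alpha_U}\cdot R^{-\alpha_D}$ (which is $R$-free once $\alpha_U=\alpha_D$), not merely the residue of the additive shift $u_5$ as you suggest. So what you describe as ``careful bookkeeping'' is in fact the analytic heart of the proof, and your proposal as written does not yet contain it.
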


\begin{proof}
In feedback mode, the interplay between the uplink and downlink channels introduces a coupled dual-channel fading behavior. This coupling complicates the analysis of feedback communication systems, as the statistical properties of both the uplink and downlink channels must be jointly considered. Accurately characterizing this interdependence is critical for evaluating the system's coverage performance.

For the probability that an AP at a distance $R$ is connectable $\varphi_{f}(R)\triangleq\Pr\big(\eta_{U}(R)\geq\Omega_f\big)$, both $\eta_{U}(R)$ and $\Omega_f$ are random variables: the randomness of $\eta_{U}(R)$ arises due to uplink small-scale fading, while the randomness of $\Omega_f$ is introduced through the downlink SNR $\eta_{D}(R)$, which influences the feedback quality.

Substituting the expression for $\eta_{U}(R)$ into $\varphi_{f}(R)$, the coverage probability can be reformulated as
\begin{eqnarray}\label{eq:6}
&&\hspace{-0.5cm}
\varphi_{f}(R)= \Pr\Big(|h_{U}|^{2}>\frac{{\Omega_f}\sigma_{U}^2}{P_{U}G_{t}G_{r}C_{U}R^{-\alpha_{U}}}\Big).
\end{eqnarray}
To simplify the notation, we introduce a function $g(R,|h_{D}|^2)$:
\begin{equation*}
g(R,|h_{D}|^2)\triangleq \frac{{\Omega}_{f}\sigma_{U}^{2}}{P_{U}G_{t}G_{r}C_{U}R^{-\alpha_{U}}},
\end{equation*}
where $\Omega_{f}$, as given in \eqref{eq:fit}, is governed by the downlink received SNR $\eta_{D}$. 

Since both $|h_{U}|^{2}$ and $|h_{D}|^{2}$ follow exponential distributions, the coverage probability can be rewritten as
\begin{eqnarray}\label{eq:7}
&&\hspace{-0.5cm}
\varphi_{f}(R)= \Pr\Big(|h_{U}|^{2}>g(R,|h_{D}|^2)\Big)\notag\\
&&\hspace{-0.5cm}=\int_{0}^{+\infty}[1-(1-e^{-\mu_{U}g(R,|h_{D}|^2)})]f(|h_{D}|^2)d|h_{D}|^2\notag\\
&&\hspace{-0.5cm}=\int_{0}^{+\infty}e^{-\mu_{U}g(R,x)}\mu_{D}e^{-\mu_{D}x}dx\notag\\
&&\hspace{-0.5cm}\overset{(a)}\approx\sum\limits_{k=1}^{{L}}w_{k}e^{-\mu_{U}g\left(R,\frac{x_{k}}{\mu_{D}}\right)},
\end{eqnarray}
where (a) follows from the Gauss-Laguerre quadrature of order $L$, with $x_k$ denoting the roots of the Laguerre polynomials and $w_k$ the corresponding weights. A higher value of $L$ generally leads to more accurate approximations.

To derive the coverage probability, an accurate approximation of $g(R,|h_{D}|^2)$ is necessary. We start by approximating the downlink SNR in decibels, denoted as $\eta_{D,dB}\left(\frac{x_{k}}{\mu_{D}}\right)$, as follows:
\begin{eqnarray}\label{17}
&&\hspace{-0.5cm}\eta_{D,dB}\left(\frac{x_{k}}{\mu_{D}}\right)=10\lg{\left(\frac{P_{D}G_{t}G_{r}C_{D}x_{k}R^{-\alpha_{D}}}{\mu_{D}\sigma_{D}^2}\right)}\notag\\
&&\hspace{-0.5cm}=10\lg{(Z_{1,k}R^{-\alpha_{D}})}
=10\lg{Z_{1,k}}+10\lg{R^{-\alpha_{D}}}\notag\\
&&\hspace{-0.5cm}=10\lg{Z_{1,k}}+\frac{10}{\ln10}\ln{R^{-\alpha_{D}}}\notag\\
&&\hspace{-0.5cm}\approx 10\lg{Z_{1,k}}+\frac{10}{\ln10}(R^{-\alpha_{D}}-1),
\end{eqnarray}
where $Z_{1,k}\triangleq\frac{P_{D}G_{t}G_{r}C_{D}x_{k}}{\mu_{D}\sigma_{D}^{2}}$ and $Z_{2}\triangleq\frac{\sigma_{U}^{2}}{P_{U}G_{t}G_{r}C_{U}}$ encapsulates system parameters for the downlink and uplink, respectively.

Substituting \eqref{17} into \eqref{eq:fit}, the critical SNR in decibels $\Omega_{f,dB}$ can be approximated as
\begin{equation}
\Omega_{f,dB}\approx\frac{1}{\exp{(B_{1,k}+B_{2}R^{-\alpha_{D}}-1-u_4})+u_4}+u_5,
\end{equation}
where $B_{1,k}=1+u_{1}a+u_{3}+u_{4}+10(u_0+u_{2}a)(\lg{Z_{1,k}}-\frac{1}{\ln10})$ and $B_{2}=\frac{10}{\ln10}(u_{0}+u_{2}a)$.

Using the first-order Taylor series expansion $e^x=1+x+R_{1}^{(1)}$, the approximation of $\Omega_{f,dB}$ becomes
\begin{equation}
\Omega_{f,dB}\approx\frac{1}{B_{1,k}+B_{2}R^{-\alpha_{D}}+R_{1}^{(1)}}+u_{5}.
\end{equation}
Applying another Taylor series expansion $\frac{1}{1+x}=1-x+R_{1}^{(2)}$, we further simplify $\Omega_{f,dB}$ as
\begin{equation}\label{eq:omega}
\Omega_{f,dB}\approx\frac{1}{B_{1,k}}\left(1-\frac{B_{2}}{B_{1,k}}R^{-\alpha_{D}}-\frac{R_{1}^{(1)}}{B_{1,k}}+R_{1}^{(2)}+{u_{5}}{B_{1,k}}\right).
\end{equation}
Here, the bounds of the Lagrange remainder terms satisfy $R_{1}^{(1)}\geq 0$ and $R_{1}^{(2)}\geq 0$.

We now manipulate $g\left(R,\frac{x_{k}}{\mu_{D}}\right)$, defined as the effective threshold function:
\begin{eqnarray}\label{eq:g}
&&\hspace{-0.5cm}g\left(R,\frac{x_{k}}{\mu_{D}}\right)=Z_{2}R^{\alpha_{U}}10^{\frac{1}{10}\Omega_{f,dB}}\\
&&\hspace{-0.5cm}=Z_{2}R^{\alpha_{U}}e^{\frac{\ln10}{10}\Omega_{f,dB}}
= Z_{2}R^{\alpha_{U}}\left(1+{\frac{\ln10}{10}\Omega_{f,dB}}+R_{1}^{(3)}\right), \notag
\end{eqnarray}
where $R_{1}^{(3)}$ represents the remainder of the Taylor expansion.

Substituting \eqref{eq:omega} into \eqref{eq:g}, we obtain
\begin{eqnarray}\label{eq:24}
&&\hspace{-0.5cm}g\left(R,\frac{x_{k}}{\mu_{D}}\right)\approx Z_{2}R^{\alpha_{U}}\bigg[1+\frac{\ln{10}}{10B_{1,k}}\Big(1-\frac{B_{2}}{B_{1,k}}R^{-\alpha_{D}}\notag\\
&&\hspace{-0.5cm}\qquad\qquad\qquad -\frac{R_{1}^{(1)}}{B_{1,k}}+R_{1}^{(2)}+{u_{5}}{B_{1,k}}\Big)+R_{1}^{(3)}\bigg].
\end{eqnarray}
In particular, the Lagrange form of the remainder $R_{1}^{(3)}$ can be approximated by
\begin{eqnarray}
&&\hspace{-0.5cm}R_{1}^{(3)}=\frac{1}{2!}e^{\frac{\ln{10}}{10B_{1,k}}(1-\frac{B_{2}}{B_{1,k}}c^{-\alpha_{D}}-\frac{R_{1}^{(1)}}{B_{1,k}}+R_{1}^{(2)}
+u_{5}B_{1,k})}\\
&&\hspace{-0.5cm}\times\bigg[{\frac{\ln{10}}{10B_{1,k}}(1-\frac{B_{2}}{B_{1,k}}R^{-\alpha_{D}}-\frac{R_{1}^{(1)}}{B_{1,k}}+R_{1}^{(2)}
+u_{5}B_{1,k})}\bigg]^2\notag\\
&&\hspace{-0.5cm}\approx\frac{1}{2}\bigg[{\frac{\ln{10}}{10B_{1,k}}(1\!-\!\frac{B_{2}}{B_{1,k}}R^{-\alpha_{D}}\!-\!\frac{R_{1}^{(1)}}{B_{1,k}}\!+\!R_{1}^{(2)}
\!+\!u_{5}B_{1,k})}\bigg]^2.\notag
	\label{eq:R}
\end{eqnarray}
Since $R_{1}^{(1)}\geq 0$, $R_{1}^{(2)}\geq 0$, and $(1+x)^\alpha\approx{1+\alpha{x}}$, $R_{1}^{(3)}$ can be further refined as
\begin{eqnarray}
&&\hspace{-0.5cm}R_{1}^{(3)}\approx \frac{\ln^2{10}}{200B_{1,k}^2}(1+u_{5}B_{1,k})^2\bigg[1-\frac{B_{2}R^{-\alpha_{D}}}{B_{1,k}(1+u_{5}B_{1,k})}\bigg]^2\notag\\
&&\hspace{-0.5cm}\approx \frac{\ln^2{10}}{200B_{1,k}^2}(1+u_{5}B_{1,k})^2\bigg[1-\frac{2B_{2}R^{-\alpha_{D}}}{B_{1,k}(1+u_{5}B_{1,k})}\bigg].
	\label{eq:25}
\end{eqnarray}

Substituting \eqref{eq:25} into \eqref{eq:24} yields
\begin{equation}\label{eq:26}
g\left(R,\frac{x_{k}}{\mu_{D}}\right)\approx J_{1,k}R^{\alpha_{U}}-J_{2,k},
\end{equation}
where $J_{1,k}\triangleq{Z_2}+\frac{Z_{2}(1+u_{5}B_{1,k})\ln10}{10B_{1,k}}+\frac{Z_{2}(1+u_{5}B_{1,k})^{2}\ln^{2}10}{200B_{1,k}^2}$ and $J_{2,k}\triangleq\frac{Z_{2}B_{2}\ln10}{10B_{1,k}^2}+\frac{Z_{2}B_{2}(1+u_{5}B_{1,k})\ln^2{10}}{100B_{1,k}^3}$. Finally, $\varphi_{f}(R)$ in Proposition \ref{prop:1} can be obtained by substituting \eqref{eq:26} into \eqref{eq:7}.
\end{proof}

To provide a benchmark for comparison with the feedback mode, we derive $\varphi_{c}(R)$, the probability that a AP at a distance $R$ is connectable in forward mode, where no feedback is employed. Specifically, we have
\begin{equation}\label{eq:13}
\varphi_{c}(R) = \Pr\big(\eta_{U}(R)\geq\Omega_c\big)=e^{-AR^{\alpha_{U}}},
\end{equation}
where $A\triangleq\frac{\mu_{U}{\Omega_c}\sigma_{U}^2}{P_{U}G_{t}G_{r}C_{U}}$. 

The comparison between the coverage probabilities $\varphi_{c}(R)$ in forward mode and $\varphi_{f}(R)$ in feedback mode highlights the substantial improvement brought by feedback.

In forward mode, the coverage probability decays exponentially with distance $R$, as captured by the fixed term $e^{-AR^{\alpha_{U}}}$, which depends solely on the uplink channel quality. In contrast, feedback mode leverages both uplink and downlink channels, where the feedback gains are encapsulated in the terms $J_{1,k}$ and $J_{2,k}$:
\begin{itemize}
    \item The term $J_{1,k}$ controls the rate at which coverage probability decays with distance. A smaller $J_{1,k}$ implies a slower decay, i.e., better coverage at larger distances.
    \item The term $J_{2,k}$ provides an additive shift gain that further boosts coverage, especially in low-SNR regions.
\end{itemize}

These two terms are directly influenced by the feedback parameters $a$ (feedback channel usage) and $\eta_D$ (feedback SNR). In particular, increasing $a$ and/or having a higher $\eta_D$ increases the feedback quality, effectively reducing the uplink SNR threshold $\Omega_f$. This leads to lower $J_{1,k}$ and higher $J_{2,k}$, thus remarkably improving coverage probability.
The improvement is particularly pronounced at larger distances, such as at the cell edge, where  the uplink SNR is typically weak, and connectivity is more prone to failure. By increasing the coverage probability through feedback, the connectivity in these edge regions can be significantly enhanced.

Given the coverage probability analysis, we next derive the number of connectable APs in both the forward and feedback modes.
We shall focus on a circular coverage area centered around the IoT device with radius $D$. 

\begin{prop}\label{prop:2}
Let $M_c(D)$ and $M_f(D)$ denote the number of APs that are connectable by an IoT device within a circular region of radius $D$ in the forward and feedback modes, respectively. Then, we have
  \begin{eqnarray}
  &&\hspace{-1cm} M_{f}(D)=2\pi\lambda\sum\limits_{k=1}^{{L}}w_{k}e^{\mu_{U}J_{2,k}}\frac{\gamma(\frac{2}{\alpha_{U}},\mu_{U}J_{1,k}{D}^{\alpha_{U}})}{(\mu_{U}J_{1,k})^{\frac{2}{\alpha_{U}}}\alpha_{U}}. \\
  &&\hspace{-1cm} M_{c}(D) =\frac{2\pi\lambda\cdot\gamma(\frac{2}{\alpha_{U}},AD^{\alpha_{U}})}{\alpha_{U}A^{\frac{2}{\alpha_{U}}}}.
\end{eqnarray}
where $\gamma(\cdot,\cdot)$ denotes the lower incomplete gamma function.
\end{prop}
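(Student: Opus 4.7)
The plan is to express the mean number of connectable APs as an integral of the coverage probability over the disc of radius $D$, using the standard Campbell/averaging identity for a homogeneous PPP of intensity $\lambda$. Because connectability depends only on the radial distance $R$ (through $\eta_U(R)$ and, in the feedback case, $\eta_D(R)$), the mean becomes a one-dimensional integral over $R$ with the area element $2\pi\lambda R\,dR$:
\begin{equation*}
M(D)=\int_{0}^{D}2\pi\lambda R\,\varphi(R)\,dR,
\end{equation*}
where $\varphi(R)$ is $\varphi_c(R)$ in forward mode and $\varphi_f(R)$ in feedback mode. This step uses the fact that, conditional on the AP locations, each AP is independently connectable with probability $\varphi(R)$, so by Campbell's theorem the expectation factorises into the product of the intensity and $\varphi$.

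For the forward mode I would substitute $\varphi_c(R)=e^{-AR^{\alpha_U}}$ from \eqref{eq:13} and evaluate $\int_0^D R\,e^{-AR^{\alpha_U}}\,dR$ via the change of variable $t=AR^{\alpha_U}$, which gives $R\,dR=\frac{1}{\alpha_U}A^{-2/\alpha_U}\,t^{2/\alpha_U-1}\,dt$; the limits become $t\in[0,AD^{\alpha_U}]$ and the integrand becomes $t^{2/\alpha_U-1}e^{-t}$, which is precisely the defining integrand of the lower incomplete gamma function $\gamma(2/\alpha_U,AD^{\alpha_U})$. Dividing by $\alpha_U A^{2/\alpha_U}$ and multiplying by the prefactor $2\pi\lambda$ yields the claimed expression for $M_c(D)$.

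For the feedback mode, I would insert the Gauss--Laguerre approximation $\varphi_f(R)\approx\sum_{k=1}^{L}w_k\,e^{-\mu_U(J_{1,k}R^{\alpha_U}-J_{2,k})}$ from Proposition~\ref{prop:1} and exchange the finite sum with the integral over $R$. Since $e^{\mu_U J_{2,k}}$ does not depend on $R$, it comes out of the integral, leaving $\int_0^D R\,e^{-\mu_U J_{1,k}R^{\alpha_U}}\,dR$ for each $k$. The same substitution $t=\mu_U J_{1,k}R^{\alpha_U}$ converts this integral into $\gamma(2/\alpha_U,\mu_U J_{1,k}D^{\alpha_U})/[\alpha_U(\mu_U J_{1,k})^{2/\alpha_U}]$, and collecting the factors gives the stated $M_f(D)$.

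There is essentially no hard step here: the only thing to be careful about is the interchange of the finite sum and the integral (trivially justified) and ensuring consistency with the exponents $2/\alpha_U$ and $-2/\alpha_U$ after the substitution. The real analytical work was done in Proposition~\ref{prop:1}; Proposition~\ref{prop:2} is a direct consequence of combining that coverage probability with the first moment formula for a PPP and a routine gamma-function substitution.
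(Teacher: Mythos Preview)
Your proposal is correct and follows essentially the same approach as the paper: both express $M(D)$ as $\int_0^D 2\pi\lambda R\,\varphi(R)\,dR$, substitute the respective coverage probabilities $\varphi_c(R)=e^{-AR^{\alpha_U}}$ and $\varphi_f(R)\approx\sum_k w_k e^{-\mu_U(J_{1,k}R^{\alpha_U}-J_{2,k})}$, and reduce the radial integral to a lower incomplete gamma function via the change of variable $t=(\text{coeff})R^{\alpha_U}$. Your write-up is in fact slightly more explicit than the paper's, naming Campbell's theorem and spelling out the substitution, but the underlying argument is identical.
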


\begin{proof}
In the feedback mode, given the coverage probability $\varphi_{f}(R)$, the number of connectable APs $M_{f}(D)$ within a circular area of radius $D$ can be expressed as
\begin{eqnarray}
&&\hspace{-0.5cm}M_{f}(D)=\int_{0}^{D}\varphi_{f}(R)\lambda2\pi{R}dR\notag\\
&&\hspace{-0.5cm}\approx\int_{0}^{D}\sum\limits_{k=1}^{{L}}w_{k}e^{-\mu_{1}g\left(R,\frac{x_{k}}{\mu_{D}}\right)}\lambda2\pi{R}dR\notag\\
&&\hspace{-0.5cm}=2\pi\lambda\sum\limits_{k=1}^{{L}}w_{k}\int_{0}^{D}e^{-\mu_{1}g\left(R,\frac{x_{k}}{\mu_{D}}\right)}{R}dR,
	\label{def:25}
\end{eqnarray}

Substituting the approximation of $g\left(R,\frac{x_{k}}{\mu_{D}}\right)$ in \eqref{eq:26} yields
\begin{eqnarray}
&&\hspace{-0.5cm}M_{f}(D)\approx2\pi\lambda\sum\limits_{k=1}^{{L}}w_{k}e^{\mu_{U}J_{2,k}}\int_{0}^{D}e^{-\mu_{U}J_{1,k}R^{\alpha_{U}}}{R}dR\notag\\
&&\hspace{-0.5cm}=2\pi\lambda\sum\limits_{k=1}^{{L}}w_{k}e^{\mu_{U}J_{2,k}}\frac{\gamma(\frac{2}{\alpha_{U}},\mu_{U}J_{1,k}{D}^{\alpha_{U}})}{(\mu_{U}J_{1,k})^{\frac{2}{\alpha_{U}}}\alpha_{U}}.
\end{eqnarray}
where $\gamma(\cdot,\cdot)$ is lower incomplete gamma function.

On the other hand, in the forward mode, the number of connectable APs $M_{c}(D)$ within a circular area of radius $D$ can be similarly expressed as
\begin{equation*}
M_{c}(D)= \int_{0}^{D}\varphi_{c}(R)\lambda2\pi{R}dR =\frac{2\pi\lambda\cdot\gamma(\frac{2}{\alpha_{U}},AD^{\alpha_{U}})}{\alpha_{U}A^{\frac{2}{\alpha_{U}}}}.
\end{equation*}
\end{proof}

\begin{rem}
    The analytical results in Propositions \ref{prop:1} and \ref{prop:2} rely on a series of approximations. To clarify the general validity of these approximations, we provide a comprehensive sensitivity analysis in Appendix \ref{sec:AppA}, examining how the accuracy of the analytical expressions varies with key system parameters.
\end{rem}

\section{Numerical Results}\label{sec:numericalresults}
This section evaluates the performance of the proposed feedback-aided coverage enhancement approach through numerical simulations. {To establish an evaluation framework, we configure the system parameters as follows. The density of APs is set to $\lambda=6\times10^{-3}$, with intercepts and path loss exponents set as $C_U\!=\!C_D\!=\!10^{-4.7}$ and $\alpha_U\!=\!\alpha_D\!=\!4$ \cite{mmWave}. 
The uplink and downlink Rayleigh fading channels are modeled with parameters $\mu_U=\mu_D=2$.
We consider a short block length $K=48$, a code rate of $1/3$, resulting in $N=144$ channel uses, and a target uplink PER $\epsilon^*=10^{-4}$. 
The downlink transmit power is fixed at $P_D=50$ mW, while the uplink transmit power is varied across $P_U\in [0.5, 1, 2]$ mW to examine the impact of different power levels on system performance. }
In the forward mode, Polar and Turbo codes with the same block length and coding rate are utilized as benchmarks, enabling a direct comparison of their performance against the proposed feedback-aided approach.

\begin{rem}
For Turbo codes, we followed the LTE standard. The encoder structure and interleaver pattern conform to the LTE specifications, and the decoder employs a Max-Log-MAP algorithm consistent with LTE decoding procedures. This setup ensures that the benchmark reflects practical performance achievable in standardized systems.

For Polar codes, we adopted a configuration aligned with the 5G New Radio (NR) standard. The code construction is based on the 5G polar design with a 1-bit parity CRC. Decoding is performed using a CRC-aided Successive Cancellation List (SCL) decoder with a list size of 4. This configuration provides a strong benchmark for short block lengths, as considered in our evaluation.
\end{rem}

\begin{figure}[t]
    \centering
    \begin{subfigure}{.38\textwidth}
        \centering
        \includegraphics[width=\textwidth]{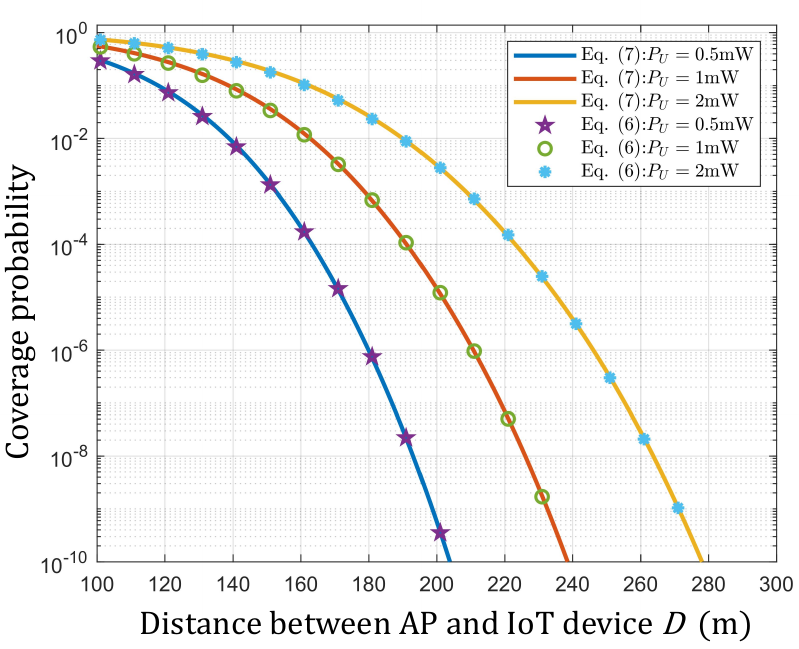}
    \end{subfigure}
    \begin{subfigure}{.38\textwidth}
        \centering
        \includegraphics[width=\textwidth]{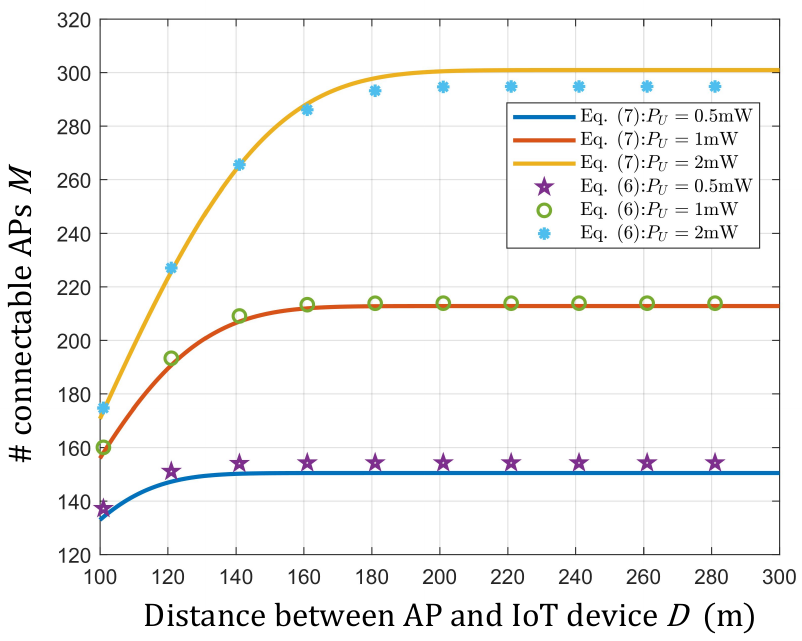}
    \end{subfigure}
    \caption{Comparison of analytical and simulation results to validate the coverage analysis in the feedback mode with the DEEP-IoT feedback coding scheme.}
    \label{fig:S0}
\end{figure}

We begin by verifying the accuracy of the analytical models established for feedback-aided connectivity in Section \ref{sec:coverage_analysis}. 
To this end, we compare the derived expressions for the number of connectable APs with simulation results, as illustrated in Fig.~\ref{fig:S0}. This comparison focuses on two key steps: the approximations of coverage probability and the computation of connectable APs. Later in Figs.~\ref{fig:a_probability} and \ref{fig:a_number}, we will introduce practical Polar and Turbo codes (in the forward mode) to benchmark coverage performance against the DEEP-IoT scheme (in the feedback mode).
The results in Fig.~\ref{fig:S0} reveal a strong alignment between the analytical predictions and simulation outcomes across varying uplink transmit power levels and distances to the IoT device.

\begin{figure}
	\centering
	\includegraphics[width=0.38\textwidth]{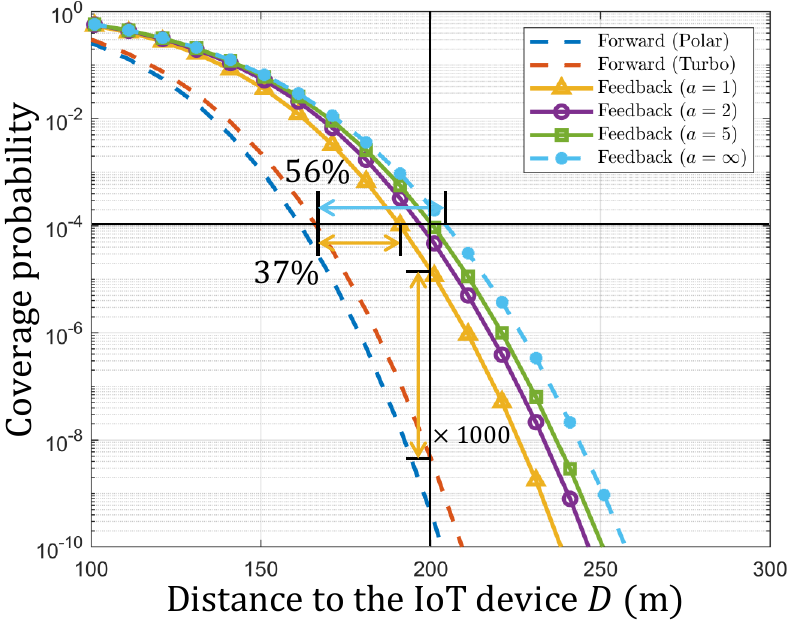}\\
	\caption{The coverage probability gains of the feedback mode versus the forward mode with Polar and Turbo codes.}
	\label{fig:a_probability}
\end{figure}

\begin{figure}
	\centering
	\includegraphics[width=0.38\textwidth]{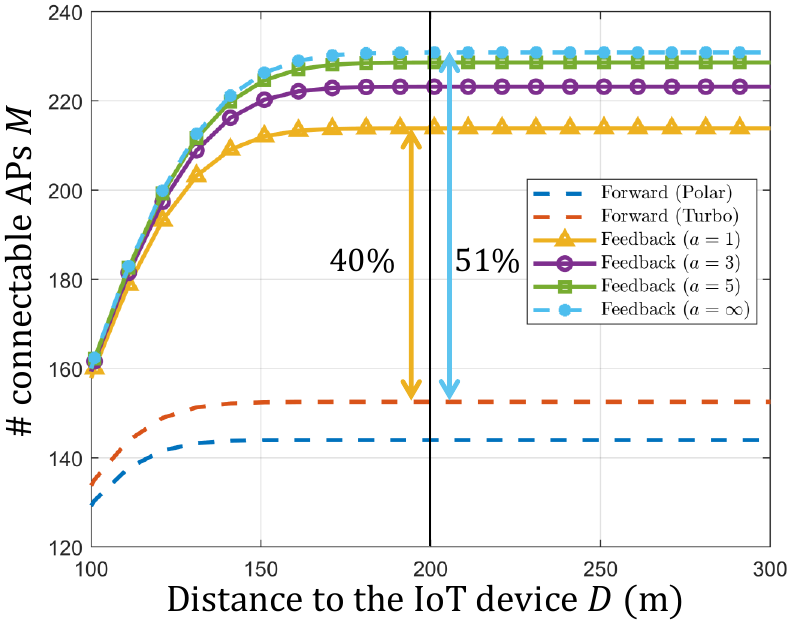}\\
	\caption{The gains in the number of connectable APs achieved by the feedback mode versus the forward mode using Polar and Turbo codes.}
	\label{fig:a_number}
\end{figure}

Next, we explore the transformative impact of feedback on connecting previously unreachable APs. Fig.~\ref{fig:a_probability} illustrates the substantial enhancement in coverage probability achieved through feedback at varying distances. Notably, its benefits are most pronounced at larger distances, particularly for IoT devices situated at the cell edge. At $200$ meters, for instance, feedback boosts the coverage probability by three orders of magnitude, effectively bridging the gap where traditional forward-only communication fails.
The extent of this improvement is closely tied to the amount of feedback, $a$. More feedback allows the user to precisely assess the decoding status of the AP, significantly extending the coverage range. Without feedback, Turbo codes ensure a coverage probability of $10^{-4}$ up to $166$ meters. In stark contrast, feedback extends this reliability to $205$ meters -- a remarkable $24\%$ increase in range, enabling connections far beyond the traditional limits.

The benefits of feedback extend beyond probability gains. Fig.~\ref{fig:a_number} showcases the number of connectable APs, $M$, at varying distances. At a distance of $200$ meters, feedback enhances the number of reachable APs by $51\%$, transforming what was previously an unconnected region into a significantly more connected zone. These results emphasize the power of feedback in ``connecting the unconnectable'', redefining the boundaries of IoT network performance.

\section{Conclusions}
\label{sec:conclusions}
This work has laid the groundwork for a transformative approach to enhancing IoT uplink connectivity through feedback-aided communication. By introducing and analyzing the dual-channel coupling effect, we have shown that real-time feedback fundamentally redefines the coverage possibilities in edge scenarios. Our findings underscore the strategic use of feedback to counteract the asymmetry between uplink and downlink performance, bridging what was previously an ``unconnectable'' gap.

While the proposed approach significantly improves uplink coverage, especially at the cell edge, we acknowledge that this gain comes with a cost: increased downlink bandwidth consumption and latency due to the feedback process. The bandwidth overhead is explicitly modeled via the number of downlink channel uses. As for latency, it is less critical in many IoT applications, such as NB-IoT and LTE-M, where delay tolerance is high and the benefits of improved coverage outweigh the delay introduced by feedback.

\appendices

\section{Sensitive Analysis of Approximations}\label{sec:AppA}
Our derivations in Propositions \ref{prop:1} and \ref{prop:2} rely on a series of mathematical approximations. To clarify the validity and practical applicability of these approximations, we now provide a detailed analysis of the approximation error in relation to key system parameters.

\vspace{0.3em} 
\noindent\textbf{Step 1: Approximating $\boldsymbol{\Omega_{f,\text{dB}}}$ using $\boldsymbol{e^x \approx 1+x}$ (Eq.(9) to Eq.(10))}

To analyze the accuracy of the first approximation, we define
\begin{eqnarray}\label{eq:yaR}
&&\hspace{-0.6cm} y(a,R) \triangleq B_{1,k} + B_2 R^{-\alpha_D} - 1 - u_4,  \\
&&\hspace{-0.6cm} B_{1,k}=1\!+\!u_{1}a\!+\!u_{3}\!+\!u_{4}\!+\!10(u_0\!+\!u_{2}a)(\lg{Z_{1,k}}\!-\!\frac{1}{\ln10}), \notag\\
&&\hspace{-0.6cm} B_{2}=\frac{10}{\ln10}(u_{0}+u_{2}a). \notag
\end{eqnarray}

We first analyze how the error behaves with respect to the number of feedback channel uses $a$. For any fixed distance $R$, $y(a, R)$ is a monotonically increasing function of $a$. The absolute error introduced by approximating $e^y$ as $1 + y$ is
\begin{equation}
    \delta(y)=\frac{1}{y+1+u_4}-\frac{1}{e^{y}+u_{4}}.
\end{equation}
Since $a\geq 1$, we have $y(a,R) \geq y(1,R) >   u_1+u_3+10(u_0+u_2)(\lg{Z_{1,k}-\frac{1}{\ln10})} \triangleq \Gamma$. 

Taking the derivative of $\delta(y)$ with respect to $y$, we obtain
\begin{equation}\label{eq:1}
    \delta'(y) = \frac{e^{y}[(y+1)^2-e^y+2u_4]-u_4^2}{(e^y+u_4)^2(y+1+u_4)^2}.
\end{equation}

Since the denominator is always positive, the sign of $\delta'(y)$ is determined by the numerator, which we define as
\begin{equation}
    s(y)\triangleq(y+1)^2-e^y+2u_4.
\end{equation}
We analyze $s(y)$ and its derivatives:
\begin{equation}
    s'(y)=2y+2-e^y,~~s''(y)=2-e^y.
\end{equation}

For $a\geq 1$, we typically have $s''(y) <0$, hence $s'(y)$ is monotonically decreasing. Furthermore, from $s'(y(1,R))<0$, we have $s'(y)<0$ and $s(y)$ is strictly decreasing; from $s(y(1,R))<0$, we have $s(y)<0$. 

As a result, the derivative in \eqref{eq:1} $\delta'(y)<0$. The absolute error $\delta(y)$ monotonically decreases as $y$ increases. Due to the nature of composite functions, $\delta(y)$ is a monotonically decreasing function of $y$, and consequently, of $a$.

We next analyze the impact of distance $R$ on the approximation. From \eqref{eq:yaR}, we know that $y(a,R)$ is monotonically decreasing in $R$. Since the absolute error $\delta(y)$ decreases as $y$ increases, it follows that $\delta(y)$ is a monotonically increasing function of $R$.

\textit{Summary:} The approximation $\exp(y) \approx 1 + y$ becomes increasingly accurate for large $a$ and small $R$. These observations are validated in Fig.~\ref{fig:first_step}, which illustrates how the approximation error diminishes with increasing $a$ and decreases with proximity to the AP (i.e., decreasing $R$).

\begin{figure}[t]
	\centering	\includegraphics[width=0.49\textwidth]{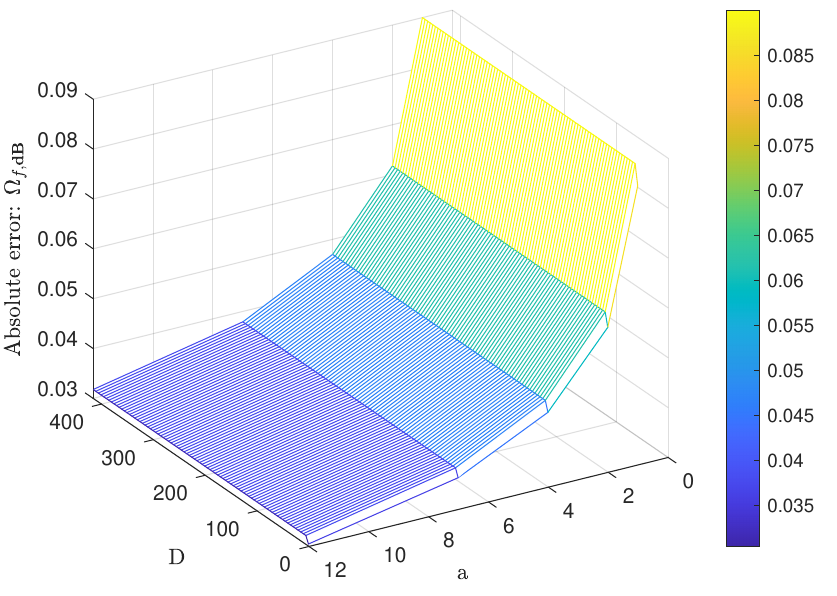}\\
	\caption{Absolute error of $\Omega_{f,\text{dB}}$ under the approximation $e^x \approx 1 + x$.}
	\label{fig:first_step}
\end{figure}

\vspace{0.3em} 
\noindent\textbf{Step 2: Approximating $\boldsymbol{\Omega_{f,\text{dB}}}$ using $\boldsymbol{\frac{1}{1+x}=1-x+R_{1}^{(2)}}$ ($\boldsymbol{x=\frac{B_{2}}{B_{1,k}}R^{-\alpha_{D}}}$) (Eq.(10) to Eq.(11))}

To evaluate the accuracy of the second-order approximation, we define the auxiliary variable
\begin{eqnarray}
&& z(a,R)\triangleq\frac{B_{2}}{B_{1,k}}R^{-\alpha_{D}}=\frac{B_2R^{-\alpha_{D}}}{W+Qa}, \\
&& W\triangleq 1+u_3+u_4+10u_0(\lg{Z_{1,k}}-\frac{1}{\ln10}),\notag\\
&& Q\triangleq u_1+10u_2(\lg{Z_{1,k}}-\frac{1}{\ln10}).\notag
\end{eqnarray}

We first analyze the behavior of the approximation error with respect to the number of feedback symbols $a$. For any fixed $R$, it is straightforward to verify that $z(a, R) > 0$ and its derivative is $z'(a,R)=\frac{Wu_2-Qu_0}{(W+Qa)^2}<0$. Thus, $z(a,R)$ is a monotonically decreasing function of $a$.

The absolute error of the second-order approximation is given by
\begin{equation}
    \delta(z)=\frac{1}{W+Qa}\left[\frac{1}{1+z}-(1-z)\right].
\end{equation}

To analyze the monotonicity, we decompose $\delta(z)$ into two parts. The derivative of the second part $\frac{1}{1+z}-(1-z)$ is
\begin{equation}
     \frac{d[\frac{1}{1+z}-(1-z)]}{dz}= 1-\frac{1}{(z+1)^2}\geq 0,
\end{equation}
which shows that the second part is a monotonically increasing function of $z$. 
Since $z(a, R)$ decreases with increasing $a$, $\frac{1}{1+z}-(1-z)$ is therefore a monotonically decreasing function of $a$. Furthermore, the first multiplicative term $\frac{1}{W+Qa}$ also decreases with $a$. Hence, the total approximation error $\delta(z)$ is a monotonically decreasing function of $a$.

Next, we investigate the dependency on the distance $R$. From the definition, $z(a, R)$ decreases with increasing $R$, which implies that both parts of $\delta(z)$ decrease with $R$. Therefore, the absolute error $\delta(z)$ is also a monotonically decreasing function of $R$.

\textit{Summary:} The second-order approximation $\frac{1}{1 + x} \approx 1 - x$ becomes increasingly accurate as both the number of feedback symbols $a$ and the device distance $R$ increase. This conclusion is supported by simulation results shown in Fig.~\ref{fig:second_step}.

\begin{figure}[t]
	\centering	\includegraphics[width=0.49\textwidth]{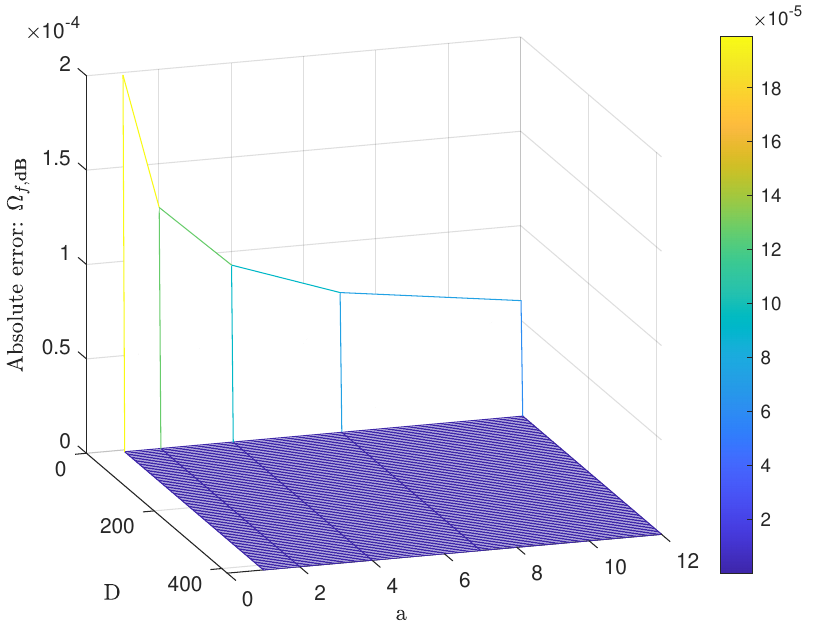}\\
	\caption{Absolute error of $\Omega_{f,\text{dB}}$ under the approximation $\frac{1}{1 + x} \approx 1 - x$.}
	\label{fig:second_step}
\end{figure}

\vspace{0.3em} 
\noindent\textbf{Step 3: Approximating $\boldsymbol{R_1^{(3)}}$ using $\boldsymbol{(1+x)^\alpha\approx{1+\alpha{x}}}$ (Eq.(14) to Eq.(15))}

To assess the accuracy of the third approximation, we define the intermediate variable
\begin{equation}
    v(a,R)\triangleq-\frac{B_{2}R^{-\alpha_D}}{B_{1,k}(1+u_{5}B_{1,k})}.
\end{equation}

We first analyze the behavior of $v(a, R)$ with respect to the number of feedback symbols $a$. The first-order derivative is given by
\begin{eqnarray}
   &&\hspace{-0.8cm} v'(a,R)=\frac{10R^{-\alpha_{D}}}{\ln{10}}\times \\
   &&\hspace{-0.8cm} \frac{u_{2}u_{5}Q^2a^2\!+\!2u_0u_5Q^2a\!-\!u_2W\!-\!u_5u_2W\!+\!u_0Q\!+\!2u_0u_5WQ}{(W+Qa)^2(1+u_{5}W+u_{5}Qa)^2}. \notag
\end{eqnarray}
As can be seen, the demonstrator is strictly positive, and the numerator of $v'(a,R)$ is a quadratic function of $a$. Since $u_{2}u_{5}Q^2<0$ and $-\frac{u_0}{u_2}<0$, the numerator is a decreasing function of $a$ for $a\geq 1$. 
Furthermore, since $v'(1,R)<0$, we have $v'(a,R)<0$, hence $v(a, R)$ is a monotonically decreasing function of $a$.

With the approximation in step 3, the absolute error of $R_1^{(3)}$ is given by
\begin{eqnarray}
&&\hspace{-0.6cm}\delta(v)\!=\!\frac{\ln^2{10}}{200(W\!\!+\!\!Qa)^2}\bigg[1\!+\!u_{5}(W\!+\!Qa)\bigg]^2\bigg[(1\!+\!v)^2\!-\!(1\!+\!2v)\bigg] \notag\\
&&\hspace{-0.6cm}=\frac{\ln^2{10}}{200}\bigg(\frac{1}{W+Qa}+u_{5}\bigg)^2\bigg[(1+v)^2-(1+2v)\bigg].
\end{eqnarray}

We now examine the monotonicity of $\delta(v)$. Since
\begin{equation}
     \frac{d[(1+v)^2-(1+2v)]}{dv}= 2v,
\end{equation}
the second part $(1+v)^2-(1+2v)$ is monotonically increasing in $v$.
Given that $v(a, R)$ is monotonically decreasing in $a$, the second part of $\delta(v)$ is a monotonically decreasing function of $a$. The first part $\frac{\ln^2{10}}{200}(\frac{1}{W+Qa}+u_{5})^2$ also decreases with $a$, hence the overall error $\delta(v)$ is a monotonically decreasing function of $a$.

Regarding distance $R$, since $v(a, R)$ decreases monotonically with $R$, it follows that the error term $\delta(v)$ also decreases with increasing $R$.

\textit{Summary:} The third approximation $(1 + x)^\alpha \approx 1 + \alpha x$ becomes more accurate as both $a$ and $R$ increase. This is confirmed by the simulation results in Fig.~\ref{fig:third_step}, which show that the error diminishes in these regions.

\begin{figure}[t]
	\centering	\includegraphics[width=0.49\textwidth]{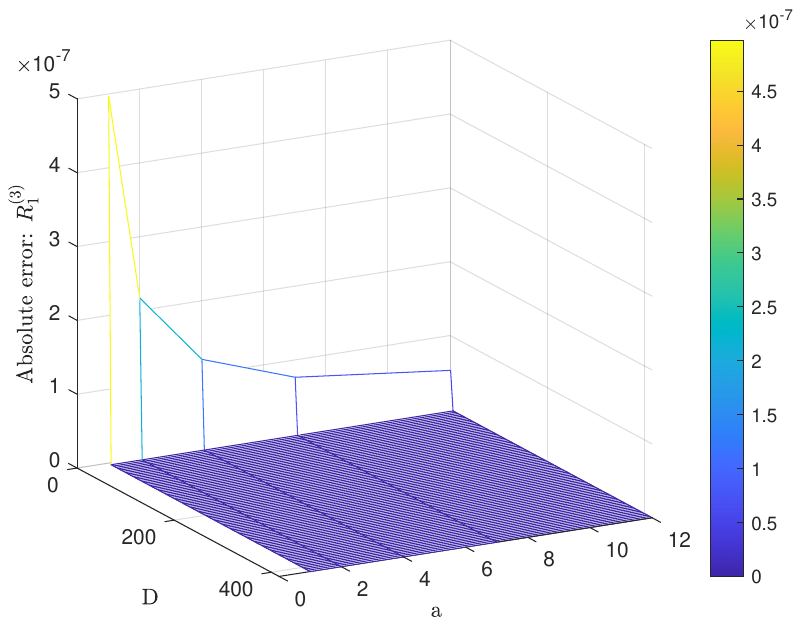}\\
	\caption{Absolute error of $R_1^{(3)}$ under the approximation $(1 + x)^\alpha \approx 1 + \alpha x$.}
	\label{fig:third_step}
\end{figure}

\vspace{0.3em} 
\noindent\textbf{Conclusion of the error analysis}

Overall, the accuracy of the analytical expressions derived in this paper depends on specific ``working points'' of the number of feedback channel uses $a$ and the device-to-AP distance $R$. Across all three approximation steps, we observe that:
\begin{itemize}
    \item Approximation errors consistently decrease with increasing $a$. This holds for each step and leads to a monotonic reduction in the overall approximation error of the main metric $M_f$, the average number of connectable APs.
    \item Regarding distance $R$, the first-step approximation (based on $e^x\approx 1+x$) dominates the total error. While the second and third steps yield relatively minor and tightly bounded errors, the first step introduces a noticeable error that grows with increasing $R$. This implies that the total approximation error of  $M_f$ increases with distance, primarily due to the first step.
\end{itemize}

To quantitatively validate these observations, we conduct a numerical evaluation of the absolute error of $M_f$ across a three-dimensional parameter space: uplink power $P_U$, feedback size $a$, and IoT-to-AP distance $D$. The results are presented in Fig.~\ref{fig:error}.

From Fig.~\ref{fig:error}, we conclude the following:
\begin{itemize}
    \item The approximation is most accurate in regimes with large $a$ and small $D$, which aligns with the analytical trends derived in our step-by-step analysis.
    \item Regarding uplink power $P_U$, a smaller value of $P_U$ leads to a larger $Z_2=\frac{\sigma^2_u}{P_U G_t G_r C_U}$, which in turn amplifies the error in the function $g\left(R,\frac{x_k}{\mu_D}\right)$ in (12), and thus increases the total error in $M_f$. However, this dependence is more moderate compared to the effects of $a$ and $D$.
\end{itemize}

\begin{figure}[t]
	\centering	\includegraphics[width=0.49\textwidth]{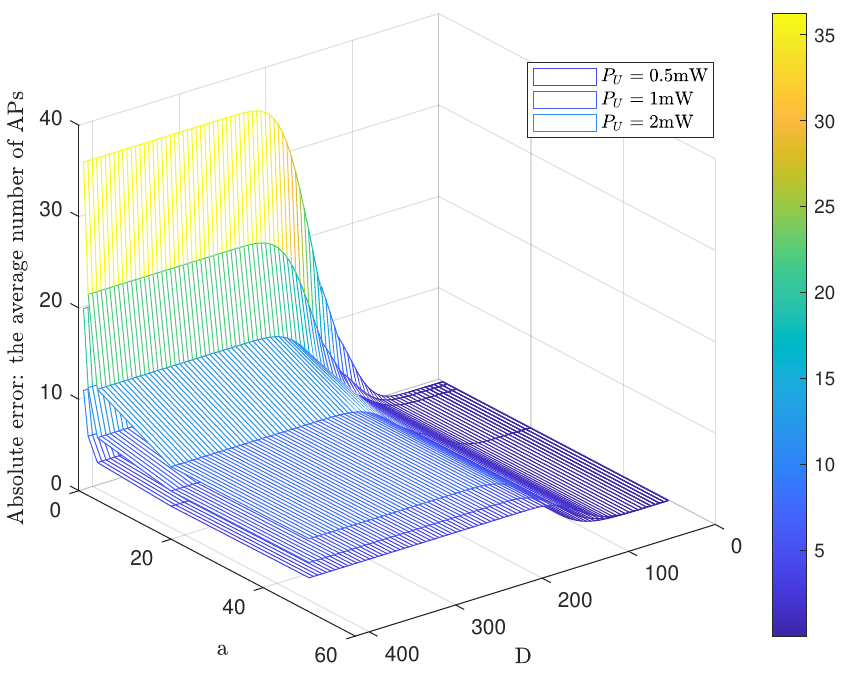}\\
	\caption{Absolute error of the average number of connectable APs as a function of $a$ and $D$.}
	\label{fig:error}
\end{figure}

\begin{figure}[t]
	\centering	\includegraphics[width=0.49\textwidth]{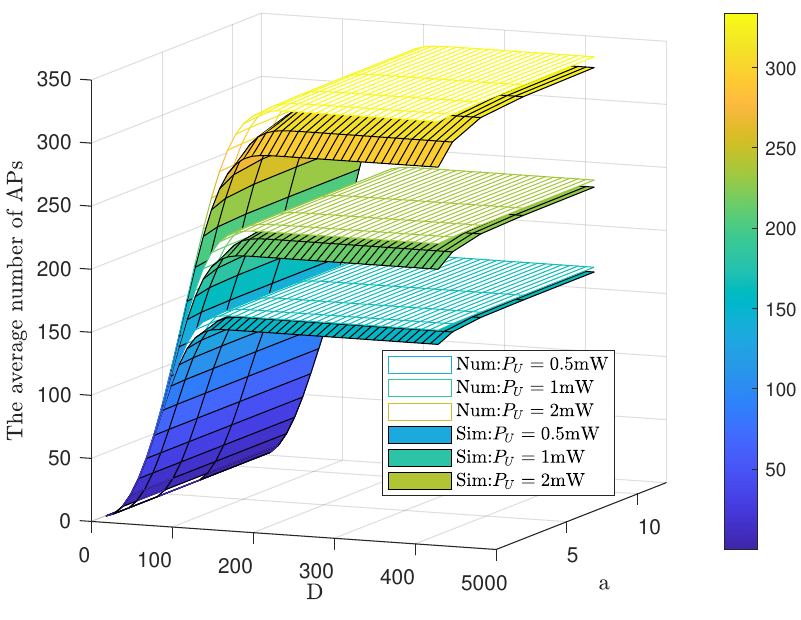}\\
	\caption{The average number of APs. The analytical predictions remain tightly aligned with empirical measurements across various system configurations.}
	\label{fig:accuracy}
\end{figure}

While our analytical expressions do involve several approximations, they are accurate within a wide and practically relevant operating region, particularly when the number of feedback subcarriers $a$ is moderate to large (e.g., $a\geq 2$) and the IoT device is not at extreme distances from the AP (e.g., $D\leq 250$ meters). These conditions are typical in real-world IoT deployments with short packet transmissions and limited transmit power. Moreover, our simulation results in Fig.~\ref{fig:accuracy} confirm that the analytical predictions remain tightly aligned with empirical measurements across various system settings.

\bibliographystyle{IEEEtran}
\bibliography{reference.bib}
\end{document}